\newcommand{\Bool}{\mathbb B}
\newcommand{\Nat}{\mathbb N}
\newcommand{\Lang}{\mathcal{L}}
\newcommand{\prop}{\mathit{Prop}}
\newcommand{\B}{\mathcal{B}}
\newcommand{\truths}{\mathcal{X}}
\renewcommand{\epsilon}{\varepsilon}
\renewcommand{\leq}{\leqslant}
\theoremstyle{definition}
\newtheorem{theorem}{Theorem}
\newtheorem{definition}[theorem]{Definition}
\newtheorem{lemma}[theorem]{Lemma}
\newtheorem{corollary}[theorem]{Corollary}
\title{Deciding minimal distinguishing DFAs is NP-complete}
\author{Jan Martens \\ j.j.m.martens@tue.nl}
\begin{document}
\maketitle

\begin{abstract}
    In this paper, we present a proof of the NP-completeness of computing the
    smallest Deterministic Finite Automaton (DFA) that distinguishes two given
    regular languages as DFAs. A distinguishing DFA is an automaton that
    recognizes a language which is a subset of exactly one of the given
    languages. We establish the NP-hardness of this decision problem by
    providing a reduction from the Boolean Satisfiability Problem (SAT) to
    deciding the existence of a distinguishing automaton of a specific size.
\end{abstract}

\section{Introduction}
We consider the problem of automatically explaining the inequivalence of
Deterministic Finite Automata (DFAs). In particular, we are interested in short
witnesses for the inequivalence. A straightforward approach to explain the
inequivalence of two DFAs would be to provide a distinguishing word, i.e. a word
that is accepted by one of the automata but not the other. 

This method of finding minimal distinguishing words is well understood and
decidable in polynomial time~\cite{moore1956gedanken}. An efficient
implementation is given in~\cite{smetsers2016minimal} that has the same runtime
complexity as the best known algorithm that decides language equivalence, known
as Hopcroft's minimization~\cite{hopcroft1971DFAmin}.

In this work we are motivated by smaller witnesses of inequivalence in the form
of regular languages. These languages might contain invariants that provide a
shorter and more intuitive explanation. For example, consider the DFAs
$\mathcal{A}$ and $\mathcal{B}$ shown in Figure~\ref{fig:DFAs}. The shortest
distinguishing word for these DFAs is $a^7$. Indeed, we confirm $a^7\in
\Lang(\mathcal{A})$ but $a^7\not\in \Lang(\mathcal{B})$. A different explanation
for the inequivalence of $\mathcal{A}$ and $\mathcal{B}$ could be: every odd
length sequence of $a$'s is accepted by $\mathcal{A}$ and not by $\mathcal{B}$. 

We call a DFA a \textit{distinguishing} automaton for two DFAs if the language
recognized is a subset of exactly one of the two DFAs. In the example from
Figure~\ref{fig:DFAs}, we see that our distinguishing witness with invariant is
equivalent to a distinguishing automaton with only two states, i.e. the DFA
$A_{odd}$ such that $\Lang(A_{odd}) = \{a^{2i + 1} \mid i\in \Nat\}$. An
automaton recognizing only the minimal distinguishing word $a^7$ would contain
at least eight states.

In the setting of model based development it can be key to understand the
differences between state based systems. This led us to study the synthesis of
\textit{distinguishing} DFAs, and leads naturally to following decision problem. 

\textbf{$k$-DFA-DIST:} Let $A_1$ and $A_2$ be DFAs such that $\Lang(A_1) \neq
\Lang(A_2)$, and $k\in\Nat$ a number. Decide if there is a DFA $A_{dist}$ with
at most $k$ states such that:
$$\Lang(A_{dist}) \subseteq \Lang(A_{1}) \iff  \Lang(A_{dist}) \not\subseteq \Lang(A_{2}).$$

\begin{figure}
    \centering
    \begin{tikzpicture}[node distance=1cm, initial text=$ $]
        \node[state, initial]  (q0) {$q_0$};
        \node[state, accepting, above right= of q0] (q1) {$q_1$};
        \node[state, accepting, below right= of q1] (q2) {$q_2$};
        \node[state, accepting, below left=of  q2] (q3) {$q_3$};
        
        \path[->] 
        (q0) edge  node[above left] {$a$} (q1)
        (q1) edge  node[above right] {$a$} (q2)
        (q2) edge  node[below right] {$a$} (q3)
        (q3) edge  node[below left] {$a$} (q0);


        \node[state, initial, right= 4.5cm of q0]  (p0) {$p_0$};
        \node[state, accepting, right= of p0] (p1) {$p_1$};
        \node[state, accepting, above right= of p1] (p2) {$p_2$};
        \node[state, accepting, below right=of  p2] (p3) {$p_3$};
        \node[state, below left= of p3] (p4) {$p_4$};
        

        \path[->] 
        (p0) edge  node[above] {$a$} (p1)
        (p1) edge  node[above left] {$a$} (p2)
        (p2) edge  node[above right] {$a$} (p3)
        (p3) edge  node[below right] {$a$} (p4)
        (p4) edge  node[left] {$a$} (p2);
    \end{tikzpicture}
    \caption{The DFA $\mathcal{A}$ on the left and the DFA $\mathcal{B}$ on the
    right side.\label{fig:DFAs}}
\end{figure}
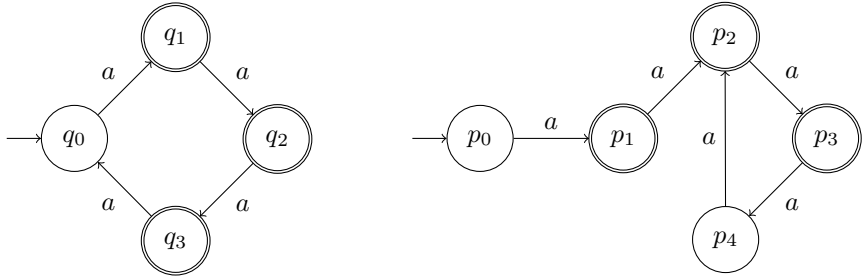

The contribution of this work is that we prove the intractability of
$k$-DFA-DIST. 
\begin{theorem}\label{thm:np-complete}
    Deciding $k$-DFA-DIST is NP-complete.
\end{theorem}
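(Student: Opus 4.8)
The plan is to establish the two halves of Theorem~\ref{thm:np-complete} separately: membership in NP and NP-hardness by reduction from SAT. For membership I would first note that the answer is trivially ``yes'' once $k$ is large. Since $\Lang(A_1)\neq\Lang(A_2)$, the product automaton recognising the symmetric difference is nonempty and contains a word $w$ of length at most $|A_1|\cdot|A_2|$; the DFA recognising $\{w\}$ has $|w|+2$ states and is a distinguisher, because $\{w\}$ is a subset of exactly the one of $\Lang(A_1),\Lang(A_2)$ that contains $w$. Hence if $k\geq |A_1|\cdot|A_2|+2$ we answer yes outright, and otherwise $k$ is bounded polynomially in the input, so a candidate $A_{dist}$ with at most $k$ states is a polynomial-size certificate. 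Verification is polynomial: each inclusion $\Lang(A_{dist})\subseteq\Lang(A_i)$ reduces to testing emptiness of $\Lang(A_{dist})\cap\overline{\Lang(A_i)}$ through a product construction, and the biconditional relating the two inclusions is then checked directly.

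For hardness I would reduce from SAT, engineering the phenomenon illustrated in Figure~\ref{fig:DFAs}: the shortest distinguishing \emph{word} is forced to be long, while a compact distinguishing \emph{automaton} exists only when the formula is satisfiable. Given $\varphi$ with variables $x_1,\dots,x_n$ and clauses $C_1,\dots,C_m$, I would work over an alphabet whose words encode a candidate assignment together with a clause index, so that a single word $w_{\sigma,j}$ represents ``test clause $C_j$ under assignment $\sigma$''. The automaton $A_1$, of size $\bigO(nm)$ and hence polynomial, accepts precisely the words whose encoded assignment satisfies the indicated clause, and $A_2$ is chosen so that the region $\Lang(A_1)\setminus\Lang(A_2)$ can be met by a small sub-language only through words carrying one globally consistent, satisfying assignment. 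The intended distinguisher for a satisfying assignment $\sigma^{*}$ is then the invariant language $\{\, w_{\sigma^{*},j}\mid 1\leq j\leq m\,\}$, recognised by a DFA of roughly $n+\bigO(1)$ states that branches on the clause index into a single shared chain reading off the fixed $\sigma^{*}$; I would set the threshold $k$ to this intended size.

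\emph{Completeness} (satisfiable $\Rightarrow$ small distinguisher) is routine: from a satisfying $\sigma^{*}$ one builds the chain DFA above and checks that its language lies inside exactly one of the two given languages. The substance of the reduction, and the step I expect to be the main obstacle, is \emph{soundness}: proving that any DFA with at most $k$ states that distinguishes $A_1$ and $A_2$ forces $\varphi$ to be satisfiable. This demands tight control of the whole symmetric difference $\Lang(A_1)\triangle\Lang(A_2)$, so that every short word in it is excluded (ruling out cheap singleton distinguishers) and so that any sufficiently small distinguisher must, via a pumping and consistency argument, encode one assignment passing all clause-tests rather than ``cheating'' by testing only some clauses or by mixing inconsistent assignments. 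Arranging $A_2$ and the encoding to preclude every unintended small distinguisher, while keeping both automata polynomial and pinning $k$ to the exact value that separates the satisfiable from the unsatisfiable case, is where essentially all the difficulty lies; once it is in place, polynomial-time computability of the reduction yields NP-hardness, and together with membership this proves Theorem~\ref{thm:np-complete}.
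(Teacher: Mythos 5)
Your NP-membership argument is correct and matches the paper's in substance: both bound the size of a minimal distinguisher polynomially (the paper observes that whichever of $A_1$, $A_2$ is not contained in the other is itself a distinguisher; your singleton-word DFA and the explicit case split on $k \geq |A_1|\cdot|A_2|+2$ achieve the same), and both verify the two inclusions via product-and-emptiness checks.

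The hardness half, however, has a genuine gap: you correctly isolate the two missing pieces --- the construction of the second automaton and the soundness direction --- and then leave both unresolved, saying this is ``where essentially all the difficulty lies.'' Those pieces are the entire content of the paper's reduction, and its devices differ materially from your sketch. First, the paper makes the two languages \emph{nested}, $L^-_\phi \subseteq L^+_\phi$ (in the paper's notation: $k$ variables, $n$ clauses): $L^-_\phi$ consists of up to $n$ $\sharp$-terminated assignment blocks, and $L^+_\phi$ additionally contains every word extending a prefix of $n$ blocks that satisfy $C_1,\dots,C_n$ block-by-block. Nesting collapses the biconditional to a single case --- a distinguisher must lie inside $L^+_\phi$ and escape $L^-_\phi$ --- eliminating the symmetric case that your non-nested design of $\Lang(A_1)$, $\Lang(A_2)$ would also have to exclude. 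Second, the intended small distinguisher is not your finite language $\{w_{\sigma^*,j}\mid 1\leq j\leq m\}$ but the \emph{infinite} cyclic language $\{(w_\rho\sharp)^i \mid i\in\Nat\}$, recognized with $k+2$ states (a $(k+1)$-cycle plus a sink). Third --- the step you flagged but did not supply --- soundness is a Myhill--Nerode pigeonhole count: any accepted word outside $L^-_\phi$ begins with $n$ satisfying blocks; the $k+1$ prefixes of the first block $w_1$ (including $\epsilon$) are pairwise inequivalent under $\equiv_{\hat{L}}$, and $\sharp$ is inequivalent to all of them, which already accounts for all $k+2$ classes; extension arguments rule out $w_1\sharp$ being equivalent to any nonempty prefix or to $\sharp$, forcing $w_1\sharp \equiv_{\hat{L}} \epsilon$; pumping this loop yields $(w_1\sharp)^n w_{post} \in \hat{L} \subseteq L^+_\phi$, so the single assignment $w_1$ satisfies every clause. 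Your finite design is also doubtful on its own terms: with threshold roughly $n+\bigO(1)$ states (your notation), a singleton language $\{w_{\sigma,j}\}$ for one satisfying clause-test has a DFA of essentially the same size as your intended machine, so excluding such cheats while admitting the intended distinguisher is precisely the crux you defer. The paper sidesteps this structurally: every word of $L^+_\phi\setminus L^-_\phi$ necessarily contains more than $n$ blocks, hence has length exceeding $n(k+1)$, so no $(k+2)$-state automaton can reach the difference without looping in exactly the constrained way the pigeonhole argument then exploits.
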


The reduction from CNF-SAT that proves the NP-completeness is new to our
knowledge. We believe this reduction of CNF-SAT formulas to regular languages is
an intuitive method of showing DFA problems NP-complete. 

There are some decision problems on DFAs that show some similarities, but are
different from the work here. For instance the early work of
Gold~\cite{gold1978} and Pfleeger~\cite{pfleeger1973state} in which it is shown
that learning minimal DFAs from (partial) observations is NP-complete. In the
line of this work by Gold, so-called \textit{separating} languages are widely
studied in the literature~\cite{demaine2011remarks,kupferman2022certifying}.
Here the separating problem is, given languages $L_1$ and $L_2$, to find a
separating language $L_{sep}$ such that $L_{sep} \subseteq L_1$ and $L_{sep}
\cap L_2 = \emptyset$. Although this resembles our distinguishing problem, a
direct relation is not trivial. 

Another influential work is due to Kozen~\cite{kozen1977lower}. This work
includes a proof of NP-hardness of deciding whether the intersection of a finite
number of DFAs is empty. 

\section{Notation \& Background}
For two natural numbers $i,j\in \Nat$ we write $[i,j] = {i, i+1, \dots , j}$ as
the closed interval from $i$ to $j$. Given a finite alphabet $\Sigma$, a
sequence of elements of $\Sigma$ is called a word. We define $\Sigma^i$ as the
set of all words over $\Sigma$ of length $i$, and $\Sigma^* = \bigcup_{i\in\Nat}
\Sigma^i$ for all words over $\Sigma$. Given words $u,v\in \Sigma^*$, we write
$u\cdot v$ and $uv$ for word concatenation. Additionally, given a number
$i\in\Nat$ and a word $u\in \Sigma^*$ we write $u^i$ for the concatenation of
$i$ times the word $u$.  

\begin{definition}
    A Deterministic Finite Automata (DFA) $A= (Q, \Sigma, \delta, q_0, F)$ is a
    five-tuple consisting of: 
    \begin{itemize}[itemsep=1pt,label=--]
        \item $Q$ a finite set of states, 
        \item $\Sigma$ a finite set of symbols called the alphabet, 
        \item $\delta: Q \times \Sigma \rightarrow Q$ the transition function,
        \item $q_0 \in Q$ the initial state, and
        \item $F \subseteq Q$ the set of final states.
    \end{itemize}
\end{definition}

The transition function $\delta$ extends naturally to a transition function for
words $\delta^*: Q \times \Sigma^* \rightarrow Q$. This is done inductively as
follows:
\begin{align*}
\delta^*(q,\epsilon) &= q\\
\delta^*(q, aw) &= \delta^*(\delta(q,a), w). 
\end{align*}

The language recognized by a DFA $A = (Q, \Sigma, \delta, q_0, F)$, is denoted
by $\Lang(A)$, and consists of all words $w\in \Sigma^*$ such that
$\delta^*(q_0, w) \in F$. 

The \textit{Myhill-Nerode theorem} is a useful tool to establish the number of
states necessary to recognize a language. It is based on the equivalence
relation relating words that have the exact same accepting extensions.

\begin{definition}
    Let $x,y\in \Sigma^*$ be words and $L\subseteq \Sigma^*$ a language, then $x
    \equiv_L y$ if and only if for all $z\in\Sigma^*$ it holds that $xz\in L
    \iff yz \in L$.
\end{definition}

\begin{theorem}{(Myhill-Nerode \cite[Theorem 3.9]{hopcroft1971DFAmin})} Let $L
    \subseteq \Sigma^*$ be a language, then $L$ is regular if and only if the
    relation $\equiv_L$ has a finite number of equivalence classes.
\end{theorem}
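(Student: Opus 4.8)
The statement is the classical Myhill--Nerode theorem, and since it is a biconditional I would prove the two implications separately, each by an explicit construction linking DFAs to $\equiv_L$-classes.

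For the forward direction, assume $L$ is regular, so there is a DFA $A = (Q, \Sigma, \delta, q_0, F)$ with $\Lang(A) = L$. The plan is to bound the number of $\equiv_L$-classes by $|Q|$. I would consider the map sending each word $w$ to the state $\delta^*(q_0, w)$ reached from the initial state, and show that any two words reaching the same state are $\equiv_L$-equivalent. Concretely, if $\delta^*(q_0, x) = \delta^*(q_0, y)$, then for every $z \in \Sigma^*$ the inductive definition of $\delta^*$ gives $\delta^*(q_0, xz) = \delta^*(\delta^*(q_0,x), z) = \delta^*(\delta^*(q_0,y), z) = \delta^*(q_0, yz)$, so $xz \in L \iff yz \in L$, i.e. $x \equiv_L y$. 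Hence each reachable state subsumes an entire family of words from a single class, so there are at most $|Q|$ classes, which is finite.

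For the converse, assume $\equiv_L$ has finitely many classes. The plan is to build a DFA directly out of these classes. Writing $[w]$ for the $\equiv_L$-class of $w$, I would set $Q = \{[w] \mid w \in \Sigma^*\}$, initial state $[\epsilon]$, transitions $\delta([w], a) = [wa]$, and final states $F = \{[w] \mid w \in L\}$; by hypothesis $Q$ is finite, so $A$ is a genuine DFA.

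The hard part, and the step I would take most care over, is well-definedness: both $\delta$ and $F$ are specified through representatives, so I must check that they do not depend on the choice of representative. For $\delta$, if $x \equiv_L y$ then for every $a \in \Sigma$ and every $z \in \Sigma^*$ we have $xaz \in L \iff yaz \in L$ (applying the definition of $\equiv_L$ to the word $az$), so $xa \equiv_L ya$ and thus $[xa] = [ya]$. For $F$, applying the definition with $z = \epsilon$ gives $x \in L \iff y \in L$, so membership in $F$ is representative-independent. Once well-definedness is settled, a routine induction on the length of $w$ shows $\delta^*([\epsilon], w) = [w]$, and therefore $w \in \Lang(A) \iff [w] \in F \iff w \in L$. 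This gives $\Lang(A) = L$ with $A$ a finite DFA, so $L$ is regular, completing the equivalence.
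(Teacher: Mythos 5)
Your proof is correct, and since the paper states this theorem only as a citation to the classical literature (Hopcroft--Ullman) without giving its own proof, there is nothing in the paper to diverge from: both of your directions---bounding the number of $\equiv_L$-classes by $|Q|$ via the map $w \mapsto \delta^*(q_0,w)$, and the quotient-automaton construction with the careful well-definedness checks for $\delta$ and $F$---constitute the standard textbook argument found in the cited reference. The one step you gloss (that $\delta^*(q_0,xz) = \delta^*(\delta^*(q_0,x),z)$, i.e.\ the composition property of $\delta^*$) is a routine induction on $z$ and does not affect correctness.
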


A more specific corollary of the theorem relates the number of equivalence
classes of $\equiv_L$ to the smallest number of states a DFA needs in order to
recognize $L$. 

\begin{corollary}\label{cor:myhill-nerode}
    Let $L$ be a regular language over an alphabet $\Sigma$, then the smallest
    DFA $A$ that recognizes $L$ has $k$ states where $k$ is the number of
    equivalence classes of the relation $\equiv_L$.  
\end{corollary}

\section{Reduction}
\newcommand{\1}{\mathtt{1}}
\newcommand{\0}{\mathtt{0}}
\renewcommand{\B}{\Bool}
Before we introduce the reduction we define some notation in which we encode
truth values of propositions. In the reduction we represent truth assignments as
words over the Boolean alphabet $\Bool = \{\0, \1\}$. Given a set of
propositional variables $\prop = \{p_1, \dots, p_k\}$, a truth assignment $\rho:
\prop \to \Bool$ is represented by the word $a_1 \dots a_k\in \B^k$, where $a_i
= \rho(p_i)$ for every $i\in [1,k]$. The set $\truths = \B^k$ defines all words
that represent truth assignments. 

Now we are ready to introduce our reduction from CNF-SAT in order to prove
Theorem~\ref{thm:np-complete}. Let $\phi = C_1 \wedge \dots \wedge C_n$ be a CNF
formula over the propositional variables $\prop = \{p_1, \dots, p_k\}$, we
define two regular languages over the alphabet $\Sigma = \Bool \cup \{\sharp\}$.
The first language $L^{-}_\phi \subseteq \Sigma^*$ is the finite set of at most
$n$ concatenated truth assignments separated by a $\sharp$, i.e.
\[
    L^{-}_\phi = \{w_1\sharp\dots w_j\sharp %
        \mid j \in [1,k]\text{ and } w_1, \dots ,w_j\in \truths\}.
\]

The second language $L^{+}_\phi \subseteq \Sigma^*$ is a superset of
$L^{-}_\phi$. In addition to all the word of $L^-_\phi$, the language
$L^{+}_\phi$ contains all words that have as prefix $n$ truth assignments $w_1,
\dots , w_n \in \truths$ that consecutively satisfy all clauses $C_1, \dots,
C_n$, more precisely that is, 
\begin{align*} L^{+}_\phi = L^{-}_\phi \cup \{w_1 \sharp \cdots w_n\sharp w \mid& w \in \Sigma^* \text{,} w_i\in \truths \text{ and $w_i$ satisfies } C_i \text{ for all } i\in[1,n]\}.
\end{align*}

The languages $L^{-}_\phi$ and $L^{+}_\phi$ are regular, and hence there are
automata that recognize these languages. In particular there are automata
recognizing these languages that are polynomial in size. One way of observing
this fact is by inspecting the number of Myhill-Nerode equivalence classes of
$L^{+}_\phi$ and $L^{-}_\phi$.

\begin{lemma}
    Given a CNF formula $\phi$, the languages $L_\phi^{+}$ and $L_\phi^{-}$ are
    recognizable by an automaton that is polynomial in the size of $\phi$.
\end{lemma}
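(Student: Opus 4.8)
The plan is to construct, directly from $\phi$, deterministic automata of size $\bigO(nk)$ for each of $\Lang^{-}_\phi$ and $\Lang^{+}_\phi$; since $nk$ is polynomial in the size of $\phi$, the claim follows. One could instead bound the number of Myhill--Nerode equivalence classes of each language and invoke Corollary~\ref{cor:myhill-nerode}, but the two arguments amount to the same bookkeeping, so I will describe the explicit construction.

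For $L^{-}_\phi$ the automaton only has to verify that the input is a concatenation of at most $n$ complete blocks from $\truths$, each terminated by $\sharp$. I would use states $(b,p)$ recording the number $b$ of blocks already completed and the number $p$ of bits read so far in the current block, together with a single rejecting sink for malformed inputs. On a symbol from $\B$ the bit counter $p$ is incremented while $p < k$; a $\sharp$ is legal exactly when $p = k$ and $b < n$, in which case it increments $b$ and resets $p$ to $0$; every other situation leads to the sink. The block-boundary states (those with $p = 0$ and $1 \le b \le n$) are the accepting ones. This gives $\bigO(nk)$ states.

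For $L^{+}_\phi$ the automaton must additionally recognise the satisfying prefixes, and the step that keeps the construction polynomial is the observation that whether a block $w_i$ satisfies its clause $C_i$ can be decided \emph{incrementally} as the block is read, rather than by remembering the whole (exponentially many) assignments: I augment the state with one bit recording whether $C_i$ has already been satisfied by the bits of $w_i$ seen so far. When the automaton is at the $p$-th position of the $i$-th block and reads a symbol, it flips this bit to true precisely when the literal determined by position $p$ and the symbol read occurs in $C_i$; this transition is computable from $\phi$ in polynomial time and depends only on $(i,p)$ and the current symbol. A second, cumulative bit records whether all completed blocks have satisfied their clauses. Once $n$ blocks have been completed with the cumulative bit true, the automaton enters an accepting sink that loops on every symbol, thereby accepting the arbitrary suffix $w$ permitted by the second part of $L^{+}_\phi$, while the block-boundary accepting states inherited from the $L^{-}_\phi$ construction keep $L^{-}_\phi \subseteq L^{+}_\phi$ recognised. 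The state space is $(b,p)$ augmented by two Boolean flags plus the two sinks, still $\bigO(nk)$.

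The main obstacle is the correctness argument for $L^{+}_\phi$: one must verify that the incremental satisfaction flag, the cumulative flag, and the routing into the accepting sink together accept a word if and only if it lies in $L^{-}_\phi$ or has a satisfying $n$-block prefix. I expect this to be a routine induction on input length, carrying the invariant that the state $(b,p,\mathrm{sat},\mathrm{allsat})$ faithfully summarises the progress through the prospective witness; the only delicate point is ensuring that breaking the block format disables the satisfying-prefix route without prematurely rejecting a word that is still a valid member of $L^{-}_\phi$.
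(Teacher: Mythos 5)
Your proposal is correct and is essentially the paper's own argument: the paper states this lemma without an explicit proof, remarking only that the bound follows by inspecting the Myhill--Nerode equivalence classes of $L^{+}_\phi$ and $L^{-}_\phi$, and your explicit $\bigO(nk)$-state automata --- a block/position counter $(b,p)$ with a per-clause satisfaction flag, a cumulative flag, and accepting/rejecting sinks --- are precisely the bookkeeping that remark gestures at, as you yourself note. The details check out (since $\sharp$ never occurs inside a block, the decomposition into the first $n$ blocks is unique, so your format-checking sink and the routing into the accepting sink interact correctly with the $L^{-}_\phi$-boundary accepting states), and the deferred induction is indeed routine.
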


The next lemma proves the key fact of our reduction. A truth assignment that
satisfies a CNF formula $\phi$ as recurring pattern forms a small distinguishing
automaton. Inversely a distinguishing automaton smaller than a certain size
necessarily implies the existence of a satisfying truth assignment for $\phi$.

\begin{lemma} \label{lemma:subseteq} Let $\phi = C_1 \wedge \dots \wedge C_n$ be
    a CNF formula over $k$ propositional letters $\prop = \{p_1, \dots, p_k\}$.
    Then $\phi$ is satisfiable if and only if there is a DFA $A_{dist}$ with at
    most $k+2$ states such that $\Lang(A_{dist}) \subseteq L^{+}_\phi$ and
    $\Lang(A_{dist}) \not\subseteq L^{-}_\phi$.
\end{lemma}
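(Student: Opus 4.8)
The plan is to prove the two implications separately. For the forward direction I would turn a satisfying assignment into an explicit small automaton, and for the backward direction I would start from a distinguishing automaton of size at most $k+2$ and use a pumping argument to recover a satisfying assignment. It is worth noting at the outset that, as long as every clause is individually satisfiable (i.e.\ nonempty), the set $L^{+}_\phi \setminus L^{-}_\phi$ is nonempty regardless of whether $\phi$ is satisfiable, since one may pick a different block $w_i \models C_i$ for each clause. Hence the content of the lemma is not the mere existence of a distinguishing language, but that satisfiability is exactly what allows such a language to be recognised within the state budget $k+2$.

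\textbf{Forward direction.} Suppose $\rho \models \phi$ and let $u \in \truths$ encode $\rho$; since $\rho$ satisfies every clause, $u$ satisfies $C_i$ for all $i \in [1,n]$. I would let $A_{dist}$ recognise $(u\sharp)^{+} = \{(u\sharp)^m \mid m \ge 1\}$, realised with $k+1$ states $q_0, \dots, q_k$ that scan the bits of a single block $u\sharp$ together with one accepting state that closes the loop back to $q_1$, for $k+2$ states in total. To check $\Lang(A_{dist}) \subseteq L^{+}_\phi$ I would split on $m$: if $m \le n$ then $(u\sharp)^m \in L^{-}_\phi \subseteq L^{+}_\phi$, and if $m > n$ then the first $n$ blocks all equal $u$ and so satisfy $C_1, \dots, C_n$, placing $(u\sharp)^m$ in the satisfying part of $L^{+}_\phi$. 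Finally $(u\sharp)^{n+1} \in \Lang(A_{dist})$ has $n+1$ blocks and therefore lies outside $L^{-}_\phi$, which gives $\Lang(A_{dist}) \not\subseteq L^{-}_\phi$.

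\textbf{Backward direction.} Conversely, assume $A_{dist}$ has at most $k+2$ states with $\Lang(A_{dist}) \subseteq L^{+}_\phi$ and $\Lang(A_{dist}) \not\subseteq L^{-}_\phi$. Pick $x \in \Lang(A_{dist}) \setminus L^{-}_\phi$; since $\Lang(A_{dist}) \subseteq L^{+}_\phi$ and $L^{+}_\phi \setminus L^{-}_\phi$ is exactly the satisfying part, we may write $x = w_1\sharp \cdots w_n\sharp w$ with each $w_i \in \truths$ and $w_i \models C_i$. Because $x$ is longer than the number of states, the accepting run on $x$ repeats a state and hence contains a cycle. Pumping this cycle produces arbitrarily long words of $\Lang(A_{dist}) \subseteq L^{+}_\phi$; being too long to lie in $L^{-}_\phi$, each must again begin with $n$ blocks satisfying $C_1, \dots, C_n$. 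The aim is to push this periodicity down to a single block: I would argue that the budget of $k+2$ states, just enough to scan one $k$-bit assignment plus a separator/loop state, forces the cycle to be a single complete block $v\sharp$ with $v \in \truths$, looping at one state reachable from $q_0$ and co-reachable to an accepting state, and moreover that the loop can be iterated from the start of the word. Iterating it then fills the first $n$ block positions of a long accepted word with copies of $v$, so $v \models C_i$ for every $i \in [1,n]$, i.e.\ $v \models \phi$.

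\textbf{Main obstacle.} The delicate step is this last one: controlling the shape and placement of the pumped cycle. A generic repeated state yields a loop over several blocks, or a factor straddling the $\sharp$-separators, and flooding such a multi-block loop only shows that block position $p$ satisfies the clauses congruent to $p$ modulo the loop length, a family of possibly different assignments rather than one assignment satisfying all of $\phi$; even a single-block loop sitting after some fixed prefix leaves the earliest clauses unaccounted for. The crux is therefore to exploit the tight bound $k+2$ to show both that the loop is a single full assignment block aligned to the separators and that it may be taken already at the first block, so that one and the same $v$ is forced to satisfy every clause. The bookkeeping around separator positions and the distinction between the initial and the accepting state is where the argument needs the most care.
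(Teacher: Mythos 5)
Your self-diagnosis is accurate: the backward direction as proposed has a genuine gap, and it is precisely the step you flag. Pumping a repeated \emph{state} along the accepting run gives a cycle with no control over its length or its position relative to the $\sharp$-separators, and no amount of iteration will align it with the first block. The paper's proof avoids run states entirely and instead counts Myhill--Nerode classes of $\hat{L} = \Lang(A_{dist})$ directly. Write the witness as $w = w_1 \sharp w_{post}$ with $w_1 = a_1 \dots a_k$. The $k+1$ prefixes $\epsilon, a_1, a_1a_2, \dots, a_1 \cdots a_k$ are pairwise inequivalent under $\equiv_{\hat{L}}$ (the appropriate suffix of $w$ completes one of them to $w \in \hat{L}$ but not the other), and $\sharp$ lies in yet another, non-accepting class, since no word of $\hat{L} \subseteq L^{+}_\phi$ begins with $\sharp$. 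By Corollary~\ref{cor:myhill-nerode} there are at most $k+2$ classes in total, and $w_1\sharp \not\equiv_{\hat{L}} \sharp$ (extend by $w_{post}$), so by pigeonhole $w_1\sharp$ must be equivalent to one of the $k+1$ prefixes. The extension $a_{i+1} \cdots a_k \sharp w_{post}$ rules out every nonempty prefix $a_1 \cdots a_i$, because appending it to $w_1\sharp$ creates a second block of length $k-i < k$, pushing the word out of $L^{+}_\phi$ and hence out of $\hat{L}$, while appending it to $a_1 \cdots a_i$ reproduces $w \in \hat{L}$. This forces $w_1\sharp \equiv_{\hat{L}} \epsilon$, so prepending $w_1\sharp$ preserves membership in $\hat{L}$; iterating, $(w_1\sharp)^n w_{post} \in \hat{L} \subseteq L^{+}_\phi$, and since its first $n$ blocks all equal $w_1$, the assignment $w_1$ satisfies every clause $C_1, \dots, C_n$. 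This is exactly the alignment your pumping plan could not produce: quotient classes indexed by prefixes of the first block replace run states, and they automatically place the ``loop'' at the very first block with period exactly $k+1$.

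There is also an off-by-one in your forward direction. The automaton you describe for $(u\sharp)^{+}$ is not a DFA in the paper's sense: $\delta$ must be total, every one of your $k+2$ states can reach the accepting state, so the unspecified transitions have no harmless target and a reject sink is unavoidable. Indeed $(u\sharp)^{+}$ has $k+3$ Nerode classes --- the $k+1$ proper prefixes of $u\sharp$, the accepting class of $u\sharp$ (distinct from $[\epsilon]$, since $u\sharp$ is in the language while $\epsilon$ is not), and the dead class --- so no $k+2$-state DFA recognizes it. The paper instead takes $L_{dist} = \{(u\sharp)^i \mid i \in \Nat\}$, merging the accepting state into the initial one: a cycle of $k+1$ states spelling $u\sharp$ plus a single sink gives exactly $k+2$ states. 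This requires reading $L^{-}_\phi$ as containing the empty concatenation ($j = 0$), which is how the paper's own proof uses the definition (the displayed range $j \in [1,k]$ contains typos); with $\epsilon$ genuinely excluded from $L^{+}_\phi$, a class count as above would show that none of the $k+2$ available classes of any candidate $\hat{L}$ could be accepting (all prefixes in $U$ are too short to lie in $L^{+}_\phi$, and $[\sharp]$ is dead), so no $k+2$-state distinguisher would exist at all and the $\epsilon$-inclusive reading is the intended one. Your choice $m \geq 1$ sidesteps the question of whether $\epsilon \in L^{+}_\phi$, but at the cost of overshooting the state budget by one.
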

\begin{proof} 
    We prove both directions of the implication separately. 
    \begin{description}
        \item[($\Rightarrow$)] Assume $\phi$ is satisfiable, then there is a satisfying truth
assignment $\rho$ that is mapped to the word $w_\rho = \rho(p_1)\dots\rho(p_k)
\in \truths$. We define the language $L_{dist} = \{(w_\rho \cdot \sharp)^i \mid
i \in \Nat \}$, and show that $L_{dist}$ witnesses this implication. 

First we show that $L_{dist} \subseteq L^{+}_\phi$. Assume $i\in \Nat$, if
$i\leq n$ then by definition $(w_\rho\cdot \sharp)^i \in L_{\phi}^-$ and hence
also in $(w_\rho\cdot \sharp)^i \in L_{\phi}^+$. If $i > n$, since $\rho$ is a
satisfying assignment, it holds for any $w'\in \Sigma^*$ that $(w_\rho\cdot
\sharp)^n w' \in L^{+}_\phi$, and thus also $(w_\rho\cdot \sharp)^n (w_\rho\cdot
\sharp)^{i-n} \in L^{+}_\phi$. By covering both cases this means $L_{dist}
\subseteq L^{+}_\phi$. 

Next, we observe that $(w_\rho\cdot \sharp)^{n+1}\not \in L^{-}_\phi$, and thus
$L_{dist}\not\subseteq L^-_\phi$. Hence, since $L_{dist} \subseteq L^{+}_\phi$
any DFA that recognizes $L_{dist}$ is a distinguishing automaton. 

The minimal DFA $A_{dist}$ such that $\Lang(A_{dist}) = L_{dist}$ contains one
loop with $k+1$ states containing all positions of the word $w_\rho\cdot \sharp$
and a sink state to reject all other words. Thus, if $\phi$ is satisfiable we
can construct $A_{dist}$ with $k+2$ states that distinguishes $L^{+}_\phi$ and
$L^{-}_\phi$, which was to be showed. 

\item[($\Leftarrow$)] We assume $A_{dist}$ is a DFA with at most $k+2$ states
such that for the language accepted $\hat{L} = \Lang(A_{dist})$ it holds that
$\hat{L}\subseteq L_\phi^{+}$ and $\hat{L} \not\subseteq L_\phi^{-}$. We show
that this means $\phi$ is satisfiable.

Since $\hat{L} \setminus L_{\phi}^- \neq \emptyset$ and $\hat{L} \subseteq
 L_\phi^+$ there is a word $w\in L^+_\phi \setminus L^-_\phi$ accepted by
 $A_{dist}$. By definition $w$ is of shape $w = w_1 \sharp \dots w_n\sharp w'$
 where $w'\in \Sigma^+$ and $w_1,\dots, w_n\in \truths$ and for every $i\in
 [1,k]$ the word $w_i$ represents a satisfying truth assignment for the clause
 $C_i$. Next we show that $w_1$ represents a satisfying truth assignment for
 $\phi$ by counting the number of equivalence classes of $\equiv_{\hat{L}}$ for
 the prefixes of $w_1\cdot \sharp$, together with the postfix $w_{post} = w_2
 \sharp \dots w_n\sharp w'$ that witnesses an accepting postfix for $w_1\sharp$. 

 We define the set $U$ as the set containing all prefixes of $w_1=a_1\dots a_k$,
 i.e. 
 \[
    U = \{\epsilon\} \cup  \{a_1\dots a_j \mid j\in [1,k]\}.
\]%
If $v,u \in U$ and $v\neq u$ then $v\not\equiv_{\hat{L}} u$, since there is a
$\sigma \in \Sigma^*$ such that $v\sigma = w$ and $w\in \hat{L}$ and $u\sigma
\not \in \hat{L}$. This means there are $|U| = k+1$ distinct classes of
${\equiv_{\hat{L}}}$. Lastly, since $\sharp z \not\in \hat{L}$ for any
$z\in\Sigma^*$ we can also conclude that $\sharp \not\equiv_{\hat{L}} u$ for all
$u\in U$. 

Since we assumed that $A_{dist}$ has at most $k+2$ states, by
Corollary~\ref{cor:myhill-nerode} there are at most $k+2$ equivalence classes of
${\equiv_{\hat{L}}}$. Since trivially $w_1\sharp \not\equiv_{\hat{L}} \sharp$,
by the pigeonhole principle there is a prefix $u \in U$ such that at $w_1
\sharp \equiv_{\hat{L}} u$.

It can not be the case that $u = a_1 \dots a_i$ for some $i\in [1,k]$, since 
\begin{alignat*}{2}
    a_1\dots a_i &\cdot a_{i+1} \dots a_k \sharp w_{post} \in \hat{L}\\
    w_1\sharp &\cdot a_{i+1}\dots a_k \sharp w_{post} \not\in \hat{L}.
\end{alignat*}%
By eliminating all alternatives we conclude $u=\epsilon$. Using this equivalence
 and since $\epsilon \cdot w_1\sharp w_{post} \in L_{dist}$ we derive that
 $w_1\sharp \cdot w_1\sharp w_{post} \in L_{dist}$. In particular, this means
 that $(w_1\sharp)^n \cdot w_{post} \in L_{dist}$. By definition of $L_{\phi}^+$
 this means that the truth assignment $w_1$ satisfies all clauses $C_1, \dots,
 C_n$ and hence it is a satisfying assignment for $\phi$. This witnesses that
 $\phi$ is a satisfying formula. \qedhere 
\end{description}
\end{proof}

\noindent This lemma allows us to prove Theorem~\ref{thm:np-complete}.
\begin{proof}[Proof of Theorem~\ref{thm:np-complete}]
    Membership of NP follows naturally. For two DFAs $A_1$ and $A_2$ we can, in
    polynomial time, check if $\Lang(A_1) \subseteq \Lang(A_2)$. This can be
    done by computing the emptiness of $\Lang(A_1) \cap \overline{\Lang(A_2)}$.
    Moreover, either $A_1$ or $A_2$ itself necessarily already is a
    distinguishing automaton, so the minimal distinguishing DFA is definitely
    polynomial in size.

    NP-hardness is a direct consequence of Lemma~\ref{lemma:subseteq} and of the
    fact that $L_{\phi}^{-} \subseteq L_{\phi}^{+}$, so the language of any
    distinguishing automaton is a subset of $L_{\phi}^{+}$ and not vice-versa. 
\end{proof}

\paragraph*{Acknowledgements:}  The author thanks Tim Willemse for raising the
question of distinguishing transition-systems with invariants. Thanks also to
Jan Friso Groote and Anna Stramaglia for providing helpful suggestions on this
document.

\bibliographystyle{plain}
\bibliography{references}

\begin{thebibliography}{1}

\bibitem{demaine2011remarks}
Erik~D Demaine, Sarah Eisenstat, Jeffrey Shallit, and David~A Wilson.
\newblock Remarks on separating words.
\newblock In {\em Descriptional Complexity of Formal Systems: 13th International Workshop, DCFS 2011, Gie{\ss}en/Limburg, Germany, July 25-27, 2011. Proceedings 13}, pages 147--157. Springer, 2011.

\bibitem{gold1978}
E~Mark Gold.
\newblock Complexity of automaton identification from given data.
\newblock {\em Information and Control}, 37(3):302--320, 1978.

\bibitem{hopcroft1971DFAmin}
John Hopcroft.
\newblock An n log n algorithm for minimizing states in a finite automaton.
\newblock In Zvi Kohavi and Azaria Paz, editors, {\em Theory of Machines and Computations}, pages 189--196. Academic Press, 1971.

\bibitem{kozen1977lower}
Dexter Kozen.
\newblock Lower bounds for natural proof systems.
\newblock In {\em 18th Annual Symposium on Foundations of Computer Science (sfcs 1977)}, pages 254--266. IEEE, 1977.

\bibitem{kupferman2022certifying}
Orna Kupferman, Nir Lavee, and Salomon Sickert.
\newblock Certifying dfa bounds for recognition and separation.
\newblock {\em Innovations in Systems and Software Engineering}, 18(3):405--416, 2022.

\bibitem{moore1956gedanken}
Tyler Moore.
\newblock Gedanken-experiments on sequential machines.
\newblock In C.~E. Shannon and J.~McCarthy, editors, {\em Automata Studies, Annals of Mathematical Studies, no. 34}. Citeseer, 1956.

\bibitem{pfleeger1973state}
Charles~P Pfleeger.
\newblock State reduction in incompletely specified finite-state machines.
\newblock {\em IEEE Transactions on Computers}, 100(12):1099--1102, 1973.

\bibitem{smetsers2016minimal}
Rick Smetsers, Joshua Moerman, and David~N Jansen.
\newblock Minimal separating sequences for all pairs of states.
\newblock In Adrian-Horia Dediu, Jan Janou{\v{s}}ek, Carlos Mart{\'i}n-Vide, and Bianca Truthe, editors, {\em Language and Automata Theory and Applications ({LATA} 2016)}, pages 181--193. Springer, 2016.

\end{thebibliography}
\end{document}